\def\greedy{\mbox{\sc Greedy}}
\def\lgreedy{\mbox{\sc Lazy Greedy}}
\def\llgreedy{\mbox{\sc Local Lazy Greedy}}
\def\bmatching{\mbox{\sc $b$-Matching}}
\newcommand{\smf}[1]{{\color{black} { #1}}}
\newcommand{\Alex}[1]{{\color{black} {#1}}}
\algnewcommand{\LineComment}[1]{\STATEx \(\triangleright\) {\tt #1} }
\newcommand{\cI}{\mathcal{I}}
\begin{document}
\title{  \Large A Parallel Approximation  Algorithm for Maximizing Submodular \bmatching\ \thanks{Research supported by NSF grant CCF-1637534, DOE grant DE-FG02-13ER26135, and the Exascale Computing Project (17-SC-20-
SC), a collaborative effort of the DOE Office of Science and the NNSA.}  }
\author{S M Ferdous\thanks{Computer Science Department, Purdue University, West Lafayette IN 47907 USA. {\tt \{sferdou,apothen\}@purdue.edu}} \\
\and
Alex Pothen\footnotemark[2]
\and
Arif Khan\thanks{Data Science and Machine Intelligence, Pacific Northwest National Lab, Richland WA 99352 USA. {\tt \{ariful.khan,mahantesh.halappanavar\}@pnnl.gov}}
\and
Ajay Panyala\thanks{High Performance Computing, Pacific Northwest National Lab, Richland WA 99352 USA. {\tt ajay.panyala@pnnl.gov}}
\and
Mahantesh Halappanavar\footnotemark[3]
}
\date{}
\maketitle

% Copyright Statement
% When submitting your final paper to a SIAM proceedings, it is requested that you include 
% the appropriate copyright in the footer of the paper.  The copyright added should be 
% consistent with the copyright selected on the copyright form submitted with the paper.
% Please note that "20XX" should be changed to the year of the meeting.

% Default Copyright Statement
\fancyfoot[R]{\scriptsize{Copyright \textcopyright\ 2021 by SIAM\\
Unauthorized reproduction of this article is prohibited}}

% Depending on which copyright you agree to when you sign the copyright form, the copyright 
% can be changed to one of the following after commenting out the default copyright statement
% above.

%\fancyfoot[R]{\scriptsize{Copyright \textcopyright\ 20XX\\
%Copyright for this paper is retained by authors}}

%\fancyfoot[R]{\scriptsize{Copyright \textcopyright\ 20XX\\
%Copyright retained by principal author's organization}}

%\pagenumbering{arabic}
%\setcounter{page}{1}%Leave this line commented out.

\begin{abstract}\small\baselineskip=9pt

We design  new serial and parallel approximation algorithms for computing a maximum weight $b$-matching in an edge-weighted graph with a submodular objective function. This problem is NP-hard; the new algorithms have approximation ratio $1/3$, and are relaxations of the Greedy algorithm that rely only on local information in the graph, making them parallelizable. We have designed and implemented Local Lazy Greedy algorithms for both serial and parallel computers. We have applied the approximate submodular 
$b$-matching algorithm to assign tasks to processors in the computation of Fock matrices in quantum chemistry on parallel computers. The assignment seeks to reduce the run time by  balancing the computational load  on the processors and bounding the number of messages that each processor  sends. 
We show that the new assignment of tasks to processors provides a four fold speedup over the currently used assignment in the NWChemEx software on $8000$ processors on the Summit supercomputer at Oak Ridge National Lab.

\end{abstract}
\section{Introduction}
%\smf{Should we replace the Local Lazy Greedy in this section to \llgreedy ?}

We describe new serial and parallel approximation algorithms for computing a maximum weight \bmatching \ in an edge-weighted graph with a  submodular objective function. This problem is NP-hard; the new algorithms have approximation ratio $1/3$, and are variants of the Greedy algorithm that rely only on local information in the graph, making them parallelizable. We apply the approximate submodular 
\bmatching \ algorithm to assign tasks to processors in the computation of Fock matrices in quantum chemistry on parallel computers, in order to balance the computational load  on the processors and bound the number of messages that a processor sends. 

 A \bmatching \ is a subgraph of the given input graph, where the degree of each vertex $v$ is bounded by a given natural number $b(v)$. 
 In linear (or modular) \bmatching \ the objective function is the 
 sum of the weights of the edges in a \bmatching, and we seek to maximize this weight.  The well-known maximum edge-weighted matching problem is the $1$-matching problem. Although these problems can be solved in  polynomial time, in recent years a number of approximation algorithms have been developed since the run time of exact algorithms can be impractical on massive graphs. 
 These algorithms are based on the paradigms of short augmentations
 (paths that increase the cardinality or weight of the matching)~\cite{pettie2004simpler}; 
 relaxations of a global ordering (by non-increasing weights) of edges  to local orderings~\cite{preis1999linear}; partitioning heavy weight paths in the graph into matchings~\cite{drake2003simple}; proposal making algorithms similar to stable matching~\cite{khan2016efficient,manne2014new}, etc. Some, though not all, of these algorithms are concurrent and can be implemented on parallel computers; a recent survey is available in ~\cite{pothen2019approximation}. 
 
 \Alex{
  Our algorithm employs the concept of an edge being locally dominant in its neighborhood  that 
  was first employed by Preis \cite{preis1999linear} to design the $1/2$-approximate matching algorithm for (modular or linear)  maximum weighted $1$-matching; the approximation ratio is as good as the Greedy algorithm, and Preis showed that the algorithm could be implemented in time linear in the size of the graph. Since then there has been much work in implementing variants of the locally dominant edge algorithm for $1$-matching and $b$-matching  on both serial and parallel computational models (e.g., \cite{khan2016efficient,manne2014new}). More details
  are included in~\cite{pothen2019approximation}. 
 }

In this paper we employ the local dominance technique, relaxing global orderings to local orderings,  to the \bmatching\ problem with submodular objective. 
We exploit the fact that the \smf{ \bmatching\ problem may be viewed as a 2-extendible system, which is a \Alex{relaxation} of a matroid}. 
We  show that any algorithm that adds locally optimal edges, with respect to the marginal gain for a submodular objective function, to the matching preserves the approximation ratio of the corresponding global Greedy algorithm. This result offers a blueprint to design many approximation algorithms, of which we develop one: \llgreedy \ algorithm. Testing for local optimality in the submodular objective is more expensive than in the linear case due to the variability of the marginal gain. To efficiently maintain marginal gains, we borrow an idea from the lazy evaluation of the Greedy algorithm. Combining local dominance and lazy Greedy techniques, we develop a  \llgreedy \ algorithm,  which is theoretically and practically faster than the Lazy Greedy algorithm. \smf{The runtime of both these algorithms are analyzed under a natural local dependence assumption on the submodular function.} Since our algorithm is parallelizable thanks to the local orderings, we  show good scaling performance on a shared memory parallel environment. To the best of our knowledge, this is the first parallel implementation of a submodular \bmatching\ algorithm. 

Submodular \bmatching\ has applications in many real-life problems. 
Among these are content spread maximization in social networks~\cite{chaoji2012recommendations},  
%This problem considers in which pairs of neighbors should be friends in social networks  to maximize the spreadability of the contents. Submodular \bmatching \ has been applied in 
peptide identification in tandem mass spectrometry \cite{Bai16,Bai19}, word alignment  in natural language processing \cite{Lin11},  and diversity maximizing assignment \cite{ahmadi2019algorithm,Dickerson18}. Here we show another application of submodular \bmatching \ in load balancing the Fock matrix computation in quantum chemistry on a  multiprocessor environment.  Our approach enables the assignment of tasks to processors leading to scalable Fock matrix computations. 
%\smf{Why computing the Fock matrix is important? why are multiprocessors needed? }. 
%Balancing computational loads among processors in building the Fock matrix is necessary for scaling to  large problems. 
%The current assignment technique that assigns computation of blocks to processors does not scale beyond 3,000 cores. In this paper, we generate the assignments of the blocks of the Fock matrix build through submodular \bmatching.\ Our approach improves the run time per iteration of the build algorithm  while scaling to 8,000 cores on the Summit supercomputer.

%There are several  applications for submodular \bmatching \:  content spread maximization \cite{chaoji2012recommendations}, peptide identification in tandem mass spectrometry \cite{Bai16,Bai19}, word alignment \cite{Lin11},  and diversity maximizing assignment \cite{ahmadi2019algorithm,Dickerson18}.

We highlight the following contributions:
\begin{itemize}
    \item We show that any \bmatching\ that is $\epsilon$-locally dominant w.r.t the marginal gain is $\frac{\epsilon}{2+\epsilon}$-approximate for submodular objective functions, and devise an algorithm, \llgreedy \ to compute such a matching. Under \smf{a natural local dependence assmption on the submodular function}, this algorithm runs in $O( \beta \, m \log \Delta)$ time and is theoretically and practically faster than the popular \lgreedy\ algorithm. (Here $m$ is the number of edges, $\Delta$ is the maximum degree of a vertex, and $\beta$ is the maximum value of $b(v)$ over all vertices $v$.)
    \item We provide a shared memory parallel implementation of the \llgreedy\ algorithm. Using several real-world and synthetic graphs, we show that our parallel implementation scales to more than sixty-five cores. %\smf{We can also provide a link to the github code for the software?}
    \item We apply submodular \bmatching\ to generate an  assignment of tasks to processors  for building Fock matrices  in the NWChemEx quantum chemistry software.  The current assignment method used there does not scale beyond $3000$ processors, but our assignment shows a four-fold speedup per iteration of the Fock matrix computation, and scales to $8000$ cores of the Summit supercomputer at ORNL.
\end{itemize}

\section{Background}
\label{sec:bck}
%AP: 03-03-2021
\smf{
%Informally submodularity ensures diminishing returns. 
In this section we  describe 
%will present the necessary background that includes 
%definition of 
submodular functions and their properties,  formulate the submodular \bmatching\ problem, and discuss approximation algorithms for the problem.}
\subsection{Submodular \bmatching} 

\begin{Definition}[Marginal gain]
Given a ground set $X$, the marginal gain of adding an element $e \in X$ to a set $A \subseteq X$ is
$$
\rho_{e}(A) = f(A \cup \{e\}) - f(A).
$$
%Similarly the marginal gain of adding a subset $Q$  to another subset $A$ of the ground set $X$ is
%$$
%\rho_{Q}(A) = f(A \cup Q) - f(A).
%$$
\end{Definition}
The marginal gain may be viewed as  the discrete derivative of the set $A$ for the element $e$. 
Generalizing, the marginal gain of adding a subset $Q$  to another subset $A$ of the ground set $X$ is
$$
\rho_{Q}(A) = f(A \cup Q) - f(A).$$
\begin{Definition}[Submodular set function]
Given a set $X$, a real-valued function $f$ defined on the subsets of $X$ is submodular if
$$
    \rho_e(A) \geq \rho_e(B)
$$
for all subsets $A \subseteq B \subseteq X$, and elements $e \in X \setminus B$. 
%$e \in V$. 
The function  $f$  is monotone if for all sets $A \subseteq B \subseteq X$, we have $f(A) \leq f(B)$;  it is normalized if $f(\emptyset) = 0$.
\end{Definition}
We will assume throughout this paper that $f$ is normalized. 
The  concept of submodularity  also extends to the addition of a set. Formally, for $Q \subseteq X\setminus B$, $f$ is submodular if $\rho_Q(A) \geq \rho_Q(B)$.
%Let $f$ be a set function defined on the powerset of a set $V$, $f:2^V\xrightarrow{}R $. 
%Then $f$ is submodular if for every $A \subseteq B \subseteq V$ and $e \in V\setminus B$, we have  $f(A \cup e)-f(A) \geq f(B \cup e) - f(B)$.
%Submodularity ensures diminishing marginal gain. 
If $f$  is monotone then $\rho_e(A) \geq 0$, $\forall A \subseteq X$ and $\forall e \in X$. %Submodular functions naturally arise in several applications such as coverage and weighed coverage functions, facility location, entropy and mutual information \cite{Krause14}.

%%AP: do we need to prove this result? It must be well-known. 
%smf: I have removed the proof.
\begin{proposition}
\label{prop:sum}
Let $S = \{e_1,\ldots,e_k\}$,  $S_i$ be the subset of $S$ that contains the first $i$ elements of $S$, and $f$ be a normalized submodular function.  Then $f(S) = \sum_{i=1}^k \rho_{e_i}(S_{i-1})$.
\end{proposition}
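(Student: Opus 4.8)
The plan is to recognize that this is simply a telescoping-sum identity and that submodularity itself plays no role — only the normalization $f(\emptyset)=0$ is needed. First I would fix notation by declaring $S_0 = \emptyset$, so that $S_{i-1}$ is well-defined for $i=1$, and observe the key structural fact that $S_i = S_{i-1} \cup \{e_i\}$ for every $i \in \{1,\ldots,k\}$ (each successive prefix is obtained from the previous one by adjoining the next element).

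Next I would unfold the definition of marginal gain: for each $i$,
$$
\rho_{e_i}(S_{i-1}) = f(S_{i-1}\cup\{e_i\}) - f(S_{i-1}) = f(S_i) - f(S_{i-1}),
$$
using the structural fact above. Summing this over $i = 1,\ldots,k$ gives a telescoping sum
$$
\sum_{i=1}^k \rho_{e_i}(S_{i-1}) = \sum_{i=1}^k \bigl(f(S_i) - f(S_{i-1})\bigr) = f(S_k) - f(S_0) = f(S) - f(\emptyset).
$$
Finally, invoking normalization $f(\emptyset) = 0$ and $S_k = S$ yields $\sum_{i=1}^k \rho_{e_i}(S_{i-1}) = f(S)$, as claimed.

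There is no real obstacle here; the only points requiring a modicum of care are the boundary case (making the convention $S_0 = \emptyset$ explicit so the first term $\rho_{e_1}(S_0) = f(S_1)$ is handled uniformly) and noting that the claim in fact holds for any normalized set function, with submodularity used only elsewhere in the paper — this identity is the discrete analogue of the fundamental theorem of calculus, writing $f(S)$ as a sum of successive discrete derivatives.
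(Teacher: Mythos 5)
Your proof is correct and is the standard telescoping argument; the paper states Proposition~\ref{prop:sum} without proof, evidently regarding it as immediate, and your argument is exactly the one being taken for granted. Your observation that only normalization is needed (not submodularity) is also accurate.
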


%\begin{proof}
%By construction, $S_0 = \emptyset$, and $S_k = S$. We %have
%\begin{align*}
%    \sum_{i=1}^k \rho_{e_i}(S_{i-1}) &= \sum_{i=1}^k %f(S_{i-1} \cup e_i) -f(S_{i-1})\\
%    &= f(S_k) - f(S_0) = f(S).
%\end{align*}
%\end{proof}

\begin{proposition}
\label{prop:monotone}
For sets $A \subseteq B \subseteq X$, and $e \in X$, 
a monotone submodular function $f$ defined on $X$ satisfies
$ \rho_e(A) \geq \rho_e(B) $. 
%for every $A \subseteq B \subseteq V$, and $e \in V$.
\end{proposition}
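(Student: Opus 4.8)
The plan is to deduce this from the definition of submodularity by a short case analysis on whether $e$ lies in $B$. First I would note that when $e \in X \setminus B$ there is nothing to prove: the inequality $\rho_e(A) \geq \rho_e(B)$ is precisely the defining inequality of a submodular function (applied to the chain $A \subseteq B \subseteq X$ and the element $e \in X\setminus B$). So the only genuinely new content of the proposition is that the restriction $e \notin B$ can be dropped once $f$ is also monotone.

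Second, I would handle the remaining case $e \in B$. Here $B \cup \{e\} = B$, so $\rho_e(B) = f(B \cup \{e\}) - f(B) = f(B) - f(B) = 0$. It therefore suffices to show $\rho_e(A) \geq 0$. This is immediate from monotonicity: since $A \subseteq A \cup \{e\}$, we have $f(A) \leq f(A \cup \{e\})$, i.e. $\rho_e(A) = f(A\cup\{e\}) - f(A) \geq 0$. (This is exactly the remark made just before the proposition, that a monotone submodular $f$ satisfies $\rho_e(A)\geq 0$ for all $A$ and $e$.) Hence $\rho_e(A) \geq 0 = \rho_e(B)$.

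Combining the two cases, $\rho_e(A) \geq \rho_e(B)$ holds for every $e \in X$, which is the claim. There is no real obstacle here; the only subtlety worth flagging is that the definition of submodularity as stated only constrains $e \in X\setminus B$, so one must explicitly invoke monotonicity to cover $e \in B$ — and that case collapses to the trivial inequality $\rho_e(A)\geq 0$.
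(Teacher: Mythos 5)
Your proof is correct and follows essentially the same case analysis as the paper's: the case $e \in X\setminus B$ is handled by the definition of submodularity, and the case $e \in B$ reduces to $\rho_e(B)=0$ together with $\rho_e(A)\geq 0$ from monotonicity. The only cosmetic difference is that the paper splits $e\in B$ into the two subcases $e\in A$ and $e\in B\setminus A$, whereas your single argument covers both at once.
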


\begin{proof}
%\noindent {\bf Proof}. 
There are three cases to consider. i) $e \in X\setminus B$:  
The inequality holds by definition of a submodular function. ii) $e \in A$: Then  both sides of the inequality equal zero and the inequality holds again. iii)  $e \in B\setminus A$: 
Then $\rho_e(B) = 0$, and since $f$ is monotone, $\rho_e(A)$ is non-negative.  % $\quad \square$ 
\end{proof}

Proposition \ref{prop:monotone} extends to a set, i.e., monotonicity of $f$ implies that for every $A \subseteq B \subseteq X$, and $Q \subseteq X$, $\rho_Q(A) \geq \rho_Q(B)$. Informally Proposition~\ref{prop:monotone} states that if $f$ is monotone then the diminishing marginal gain property holds for every subset of $X$.

We are interested in maximizing a monotone submodular function with \bmatching \ constraints. Let $G(V,E,W)$ be a simple, undirected, and edge-weighted graph, where $V,E,W$ are the set of vertices, edges, and non-negative edge weights, respectively. %Let $M$ be the set of $b$-matching edges. 
For each $e \in E$ we define a variable $x(e)$ that takes values from $\{0,1\}$. Let $M \subseteq E$ denote the set of edges for which $x(e)$ is equal to  $1$, and  %So, $M$ represents the $b$-matching of $G$. 
let $\delta(v)$ denote the set of edges incident on the vertex $v \in V$. The submodular \bmatching\ problem is 
\begin{align}
\label{sub_b-match}
     &\max f(M) \nonumber \\
    &\text {subject to} \nonumber\\
    &\sum_{e \in \delta(v)}x(e) \leq b(v) \ \ \forall v \in V,\\
    %&\sum_{u \in U} x_{vu} \leq b(v)  \ \ \forall v \in V\\
    &x(e) \in \{0,1\}. \nonumber
\end{align}
Here $f$ is a non-negative monotone submodular set function, and $0 \leq b(v) \leq |\delta(v)|$ is the integer degree bound on $v$. Denote  $\beta = \max_{v \in V} b(v)$. 
%We assume $f$ to be normalized, i.e., $f(\emptyset) = 0$. 
%\smf{ The following assumption is only for analyzing the runtime of lazy greedy. With the absent of it, the runtime seems like $O(n\beta m\log m)$. For the local lazy greedy this assumption not required. This assumption should be omitted eventually. }
%In this paper, we concentrate on a particular class of submodular function on graph where the marginal gain of an edge depends on its neighbor. We will motivate this assumption with a couple of concrete examples from literature. Although this assumption is natural to many real-life problems, we note that the approximation ratio result described in this paper works for any submodular function with \bmatching\ constraint. The assumption is only for runtime analysis.

%\begin{Example}(Concave polynomial)
%\label{ex:ex1}
%$f(M)$ can be a concave weight function defined as $Z = \sum_{v \in V} \left(\sum_{e \in \delta(v)}{W(e)x(e)}\right)^\alpha$. Here $\alpha$ controls the concavity. If $\alpha = 1$, $Z$ becomes a linear function, and Problem \ref{sub_b-match} reduces to finding a maximum weight \bmatching.\ 
We now consider the \bmatching \ problem on a bipartite graph with two parts in the vertex set, say, $U$ and $V$, % such that no edge joins any two vertices in the same  part. On a bipartite graph the  \bmatching\  problem,  
where the objective function is a concave polynomial.
%is formulated as follows.
\begin{align}
\label{sub_b-match_bip}
    f = \max &\sum_{u \in U} \left (\sum_{e \in \delta(u)} W(e)x(e) \right )^\alpha \\ \nonumber 
    &+ \sum_{v \in V} \left (\sum_{e \in \delta(v)} W(e)x(e)\right)^\alpha \nonumber\\
    \text {subject to}& \nonumber\\
    &\sum_{e \in \delta(u)}x(e) \leq b(u) \ \ \forall u \in U, \nonumber\\
    &\sum_{e \in \delta(v)} x(e) \leq b(v)  \ \ \forall v \in V, \nonumber\\
    &x(e) \in \{0,1\}. \nonumber
\end{align}
\smf {The objective function Problem~\ref{sub_b-match_bip} becomes submodular when $\alpha \in [0,1]$}. This formulation has been used for peptide identification in tandem mass spectrometry \cite{Bai16,Bai19}, and word alignment in natural language processing~\cite{Lin11}.
%\end{Example}

\begin{comment} 
\begin{Example}(Clustering)

Another example of submodular matching comes from assuming a clustering on the $V$, the set of vertices of a graph,  where $V$ could be a set of  resources that could be clustered. This clustering naturally arises in applications such as movie or reviewer assignments \cite{ahmadi2019algorithm,Dickerson18}. 
%, where $V$ is the set of movies or reviewers, and  $C$ is a  set of clusters. 
Problem \ref{sub_b-match} now becomes

\begin{align}
    Z = \max &\sum_{c \in C} \left (\sum_{v \in c} \sum_{e \in \delta(v)} W(e)x(e) \right )^\alpha  \nonumber\\
    \text {subject to}& \nonumber\\
     &\sum_{e \in \delta(v)}x(e) \leq b(v) \ \ \forall v \in V,\\
    &x(e) \in \{0,1\}. \nonumber
\end{align}
\end{Example}

\end{comment} 

%\input{nwchem}
\subsection{Complexity of Submodular \bmatching\ 
and Approximation}
A \emph{subset system} is a pair $(X,\cI)$, where $X$ is a finite set of elements and $\cI$ is a collection of subsets %  (commonly called independent sets) 
of $X$ with the property that if $A \in \cI$ and $A' \subseteq A$ then $A' \in \cI$. A \emph{matroid} is a subset system $(X,\cI)$ which satisfies the  property that 
$\forall A,B \in \cI$ and $|A|<|B|$,  $\exists e \in B \setminus A$ such that $A \cup \{e\} \in \cI$. Here the sets in $\cI$ are called 
\emph{independent sets}. 
%A subset system $(U,\cI)$ is  a $k$-\emph{%independence system} \cite{Calinescu11} %(or $k$-\emph{system} \cite{Feldman11}) for an integer $k\geq 1$ if for every $A \in \cI$ and $e \in X$, there exists a set $B \subseteq A$ such that $|B| \leq k$ and $(A\setminus B) \cup \{ e\}  \in \cI$.  
A \emph{subset system} is $k$-\emph{extendible}~\cite{mestre2006greedy} if the following holds: let $A \subseteq B$, $A,B \in \cI$ and $A \cup \{e\} \in \cI$, where $e \notin A$, then there is a set $Y \subseteq B \setminus A$ such that $|Y| \leq k$ and $B \setminus Y \cup \{e\} \in \cI$. 
%This ratio also remains the same for $k-$independence system constraints. \cite{Calinescu11,jenkyns}.
%define k-exchange system.
 % , i.e., $(1-1/e)$. 
%(I did not figure out the reduction but have some idea how it may work)

Maximizing a monotone submodular function with constraints is NP-hard in general;  indeed, it is NP-hard for the simplest  constraint of  cardinality  for many classes of submodular functions \cite{Feige98,Krause05}. A Greedy algorithm that repeatedly chooses an element with the maximum marginal gain is known to achieve 
$(1-1/e)$-approximation ratio \cite{nemhauser1978analysis}, and this is tight \cite{nemhauser1978best}. The Greedy algorithm with matroid constraints is $1/2$-approximate. More generally, with $k$-matroid intersection and $k$-extendible system constraints,  the approximation ratio of the Greedy algorithm becomes $1/(k+1)$ \cite{Calinescu11}.

\section{Related Work}
\smf{Here we  situate our contributions to submodular \bmatching\ % and load balancing 
in the broader context of earlier work in submodular maximization. A  reader who is eager to get to the algorithms and results in this paper could skip this section on a first reading. } % and come back when finishing the complete paper.}

The maximum $k$-cover problem can be reduced to submodular \bmatching\  \cite{Fuji16}.  Feige \cite{Feige98} showed that there is no polynomial time algorithm for approximating the max $k$-cover within a factor of $(1-1/e+\epsilon)$ for any $\epsilon > 0$. Thus we obtain an  immediate bound on the approximation ratio of submodular \bmatching.

New approximation techniques have been developed to expedite the greedy algorithm, especially for cardinality and matroid constraints. 
\begin{comment} 
The continuous greedy approach for cardinality and matroid constraints has been introduced in \cite{Calinescu11}. In \cite{badanidiyuru2014fast}, the authors improve the continuous greedy technique to develop faster algorithms for cardinality and intersection of $p$-system and $l$-knapsack constraints. Mirzasoleiman et al. developed a randomized greedy algorithm that is faster than lazy greedy and achieves  $(1-1/e)$ approximation in expectation \cite{mirzasoleiman2015lazier}.  They also developed a distributed two-round algorithm for the cardinality constraints in \cite{mirzasoleiman2016distributed}. Under certain conditions, this algorithm achieves an approximation ratio as good as the centralized greedy algorithm. But for more general constraints such as matroid or $k$-system, the approximation ratio depends on the number of distributed nodes and the maximum independent set size.  \cite{buchbinder2019deterministic} gave the first deterministic $1/2+\epsilon$ algorithm for submodular maximization with matroid constraints. The running time of their algorithm is $O(np^2 + pT)$, where $n,p$ are the size of the ground set and rank of the matroids, respectively, and $T$ is the time required to compute a maximum weight perfect matching in a complete bipartite graph with $2p$ vertices. 
\end{comment} 
%In \cite{chekuri2019submodular}, the authors propose a parallel algorithm for submodular maximization with packing constraints, which  can model $1$-matching in a graph. 
%Their method relies on the multilinear relaxation of the submodular function.
Surveys on submodular function maximization under different constraints may be found in \cite{buchbinder2018survey,Krause14,tohidi2020submodularity}.

Several approximation algorithms have been proposed for maximizing monotone submodular functions with \bmatching\  constraints. If the graph is bipartite, then the \bmatching \  constraint can be represented as the intersection of two partition matroids, and  the Greedy algorithm provides a $1/3$-approximation ratio. 
%\bmatchings \  in non-bipartite graphs are not  representable as the intersection of partition matroids \cite{Fekete2003}. Nevertheless, as  
But  \bmatching \  forms a $2$-extendible system and  the Greedy algorithm yields a $1/3$-approximation ratio for non-bipartite graphs.  Feldman \emph{et al.} \cite{Feldman11} showed that \bmatching \  is also a \Alex{$2$-exchange system}, and  they provide a $1/(2+\frac{1}{p}+\epsilon)$-approximation algorithm based on local search, with  time complexity $O(\beta^{p+1} (\Delta-1)^p nm\epsilon^{-1})$. 
(Here $p$ is a parameter to be chosen.) \Alex{The continuous greedy and randomized LP rounding algorithms} have been used in \cite{chaoji2012recommendations} to compute a submodular \bmatching \ algorithm that produces a  $(\frac{1}{3+2\epsilon})(1-\frac{1}{e})$ approximate solution in $O(m^5)$ time.

Recently Fuji \cite{Fuji16} developed two algorithms for the problem. One of these,   Find Walk, is a modified version of the Path Growing approximation algorithm \cite{drake2003simple} proposed for 1-matching with linear weights. 
%Starting with a single vertex, the Path Growing technique greedily (w.r.t to the weight) constructs a set of vertex disjoint maximal paths. The edges of these paths are then collected into two sets $M_1$ and $M_2$, alternatively choosing the edges from the paths. Both $M_1$ and $M_2$ are matchings. The algorithm returns the heavier of the two.  
Mestre \cite{mestre2006greedy}  extended the idea to \bmatching. In \cite{Fuji16}, Fuji extended this further to the submodular objectives. They showed an approximation ratio of  $1/4$ with time complexity 
$O(\beta \, m)$. 
%\Alex{Which one is the other algorithm?} 
\smf{ The second algorithm uses randomized local search, has an approximation ratio of $1/(2+\frac{1}{p}) - \epsilon$, and runs in $O(\beta^{p+1} (\Delta-1)^{p-1} m \log \frac{1}{\epsilon}$ time in expectation.}  Here a vertex is chosen uniformly at random in each iteration, and the algorithm searches for  a certain length alternating path with increasing weights. This algorithm is similar to the $2/3-\epsilon$ approximation algorithm for linear weighted matching in \cite{pettie2004simpler} and the corresponding \bmatching\ variant in \cite{mestre2006greedy}. We list several approximation algorithms for submodular \bmatching\ in Table~\ref{tab:algorithms-summary}.

\begin{table}[b]
\resizebox{\columnwidth}{!}{
\centering
\begin{tabular}{|l l l c|}
\hline
Alg. & Appx. Ratio &  Time  & Conc.?  \\
\hline 
Greedy\cite{nemhauser1978analysis}       & $1/3$  & $O(\beta nm)$  & N   \\
\hline
Lazy Greedy \cite{minoux1978accelerated}                                                              &  & $O(\beta m \log m)$ & N \\
 & $1/3$ & assuming~\ref{assump:local} & \\
\hline
Local Search  \cite{Feldman11}          & $1/(2+\frac{1}{p}+\epsilon)$ & $O(\beta^{p+1} (\Delta-1)^p nm\epsilon^{-1})$  & N\\
\hline
Cont. Grdy+   &       &      & \\
Rand. Round\cite{chaoji2012recommendations} & $(\frac{1}{3+2\epsilon})(1-\frac{1}{e})$ & $O(m^5)$ & N\\[1ex]
\hline
Path Growing \cite{Fuji16} & 1/4 & $O(\beta m)$ & N \\[1ex]
\hline
Rand LS \cite{Fuji16}  & $1/3-\epsilon$ & $O(\beta^2 m \log 1/\epsilon)$  & N\\
  &    & in expectation & \\
\hline 
Local Lazy   &    & $O( \beta m \log \Delta)$   &Y \\ 
Greedy & $\frac{\epsilon}{2+\epsilon}$  & assuming~\ref{assump:local} & \\[2ex]
\hline
\end{tabular}
}
\caption{Algorithms for the submodular \bmatching\ problem. The last column lists if the algorithm is concurrent or not.}
\label{tab:algorithms-summary}
\end{table} 

Now we consider several related problems that do not have the \bmatching \ constraint. 

Assigning tasks to machines is a classic scheduling problem. The most studied objective here is minimizing the makespan, i.e., the maximum total time used by any machine.
%Lenstra et al.  \cite{lenstra1990approximation} introduced the first 2-approximation algorithm for the problem, where a Linear Programming (LP) problem is solved and rounded to get the scheduling. Flow based algorithm to achieve a similar approximation ratio has been developed in \cite{gairing2007faster}. 
The problem of makespan minimization can be generalized to a General Assignment Problem(GAP), where there is a fixed processing time and a cost associated with each task and machine pair. The goal is to assign the tasks into available machines with the assignment cost bounded by a constant $C$ and makespan at most $T$. Shmoys and Tardos \cite{shmoys1993approximation} extended the LP relaxation and rounding approach \cite{lenstra1990approximation} to GAP. The makespan objective can be a surrogate to the load balancing that we are seeking,  but the GAP problem does not encode the $b$-matching constraints on the machines.  Computationally solving a GAP problem entails computing an LP relaxation that is expensive for large problems. 

Another possible approach is to model  our load balancing problem  as a multiple knapsack problem (MKP). In an MKP, we are given a set of $n$ items and $m$ knapsacks such that each item $i$ has a weight (profit) $w_i$ and a size $s_i$, and each knapsack $j$ has a capacity $c_j$. The goal here is to find a subset of items of maximum weight such that they have a feasible packing in the knapsacks. MKP is a special case of GAP \cite{chekuri2005polynomial}, and 
%In our the knapsacks are the processors and the items are the blocks. The load balancing can be achieved by setting capacitiy as average of the total loads over the processors. 
like the GAP, we cannot model the $b(v)$ constraints by MKP.

Our formulation of load balancing has the most similarity with the Submodular Welfare Maximization (SWM) problem \cite{lehmann2006combinatorial}. In the SWM problem, the input consists of a set of $n$ items to be assigned to one of $m$ agents. Each agent $j$ has a submodular function
$v_j$, where $v_j(S)$ denotes the utility obtained by this agent if the set of items $S$ is allocated to her. The goal is to  partition the $n$ items into $m$ disjoint subsets $S_1,\ldots,S_m$ to maximize the total welfare, defined as $\sum_{j=1}^m v_j(S_j)$. The greedy algorithm achieves $1/2$- approximation ratio \cite{lehmann2006combinatorial}. Vondrak's $(1-1/e)$-approximation \cite{vondrak2008optimal} is the best known algorithm for this problem. This algorithm uses continuous greedy relaxation of the submodular function and randomized rounding. Although we have modeled our objective as the sum of submodular functions, unlike the SWM, we have the same submodular function for each machine; our approach could be viewed as Submodular Welfare Maximization with $b$-matching constraints. In the original SWM problem, there are no constraints on the partition size, but in our problem we are required to set an upper bound on the individual partition sizes. 
%\smf{TODO: List the results concerning multiple knapsack}

\section{Greedy and Lazy Greedy Algorithms}
A popular algorithm for maximizing submodular \bmatching\ is the \greedy \  algorithm \cite{nemhauser1978analysis}, where in each iteration, an edge  with the maximum marginal gain is added to the matching.  In its simplest form
the \greedy \  algorithm could be  expensive to implement, but submodularity
%i.e., the property of diminishing marginal gains, 
can be exploited to make it efficient. The efficient implementation is known as the  \lgreedy \  algorithm \cite{krause2008near,minoux1978accelerated}. As the maximum gain of each edge decreases monotonically in the course of the algorithm, we can employ a maximum heap to store the gains of the edges. Since the submodular function is normalized, the initial gain of each edge is just the function applied on the edge, %The initial gain is just the weight of the edge 
and at each iteration we pop an edge $e$ from the heap. If $e$ is an \emph{available edge},  i.e., $e$ can be added to the current matching without violating \bmatching \ constraints, we update its marginal gain $g(e)$. We compare $g(e)$ with the next best marginal gain of an edge, available as  the heap's current top.  %We will update the gain of the popped top with the formula, $g(e\{u,v\}) = (\sum_{j \in V} W_{uj} X_{uj} + W_{uv})^\alpha - (\sum_{j \in V} W_{uj} X_{uj}) ^ \alpha$. 
If $g(e)$ is greater than or equal to the marginal gain of the current top, we add $e$ to the matching; otherwise we push $e$ to the heap. We iterate on the edges  until the heap becomes empty. Algorithm~\ref{alg:lz} describes the \lgreedy \  approach.

\begin{algorithm}
%\label{alg:lz}
\caption{\lgreedy\ Algorithm ($G(V,E,W)$)}
\label{alg:lz}
\begin{algorithmic}
\STATE $pq$ = max heap of the edges keyed by marginal gain

\WHILE{$pq$ is not empty} %AP: is condition of while loop correct?  I think it should be, there are no available edges. Please check.
    \STATE Edge $e$ = pq.pop()
    \STATE Update marginal gain of $e$
    \IF{ $e$ is available}
        \IF{ marg\_gain of $e  \geq $ marg\_gain of pq.top()}
            \STATE Add $e$ to the matching 
            \STATE update $b(.)$ values of endpoints of $e$ %AP: what counters? 
        \ELSE
            \STATE push $e$ and its updated gain into $pq$ %the priority queue
        \ENDIF
    \ENDIF
\ENDWHILE
\end{algorithmic}
\end{algorithm}

%\paragraph{Runtime Analaysis of Lazy Greedy}
%Although both Greedy and Lazy Greedy find an edge with maximum marginal gain, the latter algorithm is more efficient. The classic algorithm is of $O(\beta nm)$. This is because at each iteration, we have to scan through all the edges to find the desired maximum marginal gain edge. And there could be atmost $\beta n$ of these iterations. On the other hand, we will prove that the lazy implemention is of $O(m \beta log m)$.
The maximum cardinality of a \bmatching \ is bounded by $\beta\, n$. 
In every iteration of the \greedy \  algorithm,  an edge with maximum marginal gain can be chosen in $O(m)$ time. Hence  the time complexity of the \greedy \  algorithm is $O(\beta\, n \,m)$. The  worst-case running time of the \lgreedy \  algorithm is no better than the % classic 
\greedy \  algorithm \cite{minoux1978accelerated}. However, 
%for the \bmatching \ problem, 
by making a reasonable assumption  we can show a better time complexity bound for  the \lgreedy \  algorithm.
%that solves the \bmatching \ problem.

%The following is an assumption on the submodular function that we will use for our algorithms' runtime analysis in this paper.

The adjacent edges of an edge $e = (u,v)$ constitute the set $N(e) = \{e^\prime: e^\prime \in \delta(u)\ \textrm{or}\ e^\prime \in \delta(v) \}$. Likewise, the adjacent vertices of a vertex $u$ are defined as the set $N(u) = \{v: (u,v) \in \delta(u)\}$.
\begin{Assumption}
\label{assump:local}
The marginal gain of an edge $e$ depends only on its adjacent edges.
\end{Assumption}
%This assumption states that only 
With this assumption, when an edge is added to the matching only the marginal gains of adjacent edges change. 
We make this assumption only to analyze the runtime of the algorithms but  not to obtain the quality of the approximation. \smf{This assumption is   applicable to  the objective function in Problem~\ref{sub_b-match_bip} that has been used in  many applications, including the one considered in this paper of load balancing Fock matrix computations.} 
\begin{Lemma}
 Under Assumption~\ref{assump:local}, the time complexity of Algorithm~\ref{alg:lz} is $O(\beta\, m \, \log m)$.
\end{Lemma}
\begin{proof}
The time complexity of Algorithm~\ref{alg:lz} depends on the number of push and pop operations in the max heap. We bound how many times  an edge $e$ is  pushed into the heap. The edge  $e$ is pushed when its updated marginal gain is less than the current top's marginal gain, and thus the number of times the marginal gain of $e$ is updated is an upper bound on the number of push operations on it. From our assumption, the update of the marginal gain of an edge $e$  can happen at most  $2 \beta$
times.  %AP:L Should this be 2 \beta, once for each endpoint?smf: yes. corrected
Hence an edge is pushed into the priority queue O($\beta$) times, and  each of these pushes can take $O(\log m)$ time. Thus the runtime for the all pushes is $O(\beta \, m \log m)$. The number of pop operations are at most the number of pushes. Thus the overall runtime of the \lgreedy\  algorithm for \bmatching \ is $O(\beta\, m \log m)$.
\end{proof}

\section{Locally Dominant Algorithm}
%\smf{TODO:Motivation behind locally dominant algorithms with references}
We introduce the concept of $\epsilon$-local dominance, 
use it to design an approximation algorithm for submodular
\bmatching, and prove the correctness of the algorithm.

\subsection{$\epsilon$-Local Dominance and Approximation Ratio} 

The \lgreedy\  algorithm presented in Algorithm~\ref{alg:lz} guarantees a $\frac{1}{3}$ approx. ratio \cite{Calinescu11,Fisher1978} by choosing 
an edge with the  highest marginal gain at each iteration, and thus it is 
% Algorithm~\ref{alg:lz} 
 an instance of a globally dominant algorithm. We will show that it is unnecessary to select a globally best edge because the same approximation ratio could be achieved by choosing an edge that is best in its neighborhood.  %We will use the proof technique in  \cite{Calinescu11} and show that the same approx. ratio is also achieved for a class of locally dominant algorithm that includes the lazy greedy as one of the cases.
%A globally dominant algorithm is also a locally dominant one. So our analysis on locally dominant matching would establish the same approximation ratio for the \greedy\ and \lgreedy\ algorithms.

%The adjacent edges of an edge $e$ are the set, $N(e) = \{e^\prime: e^\prime \in \delta(u)\ \textrm{or}\ e^\prime \in \delta(v) \}$. 
Recall that given a matching $M$, an edge  $e$ is \emph{available w.r.t $M$} if both of its end-points are unsaturated in $M$.
\begin{Definition}[Locally dominant matching]
An edge $e$ is \emph{locally dominant} if it is available w.r.t a matching $M$, and the marginal gain of $e$ is greater than or equal to  all available edges adjacent to it.  Similarly,  for an $\epsilon \in (0,1]$, an edge $e$ is  \emph{$\epsilon$- locally dominant} if its marginal gain is at least $\epsilon$ times the marginal gain of any of its available adjacent edges.  A matching $M$ is  \emph{$\epsilon$-locally  dominant} if every  edge of $M$ is  $\epsilon$-locally  dominant when it is added to the matching. 
\end{Definition}

A globally dominant algorithm is also a locally dominant one. Thus our analysis of locally dominant matchings would establish the same approximation ratio for the \greedy\ and \lgreedy\ algorithms.

\begin{theorem}
\label{theorem:one-third}
Any algorithm that produces an $\epsilon$-locally dominant  \bmatching\ is $\frac{\epsilon}{2+\epsilon}$-approximate for a submodular objective function.
\end{theorem}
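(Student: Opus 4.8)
The plan is a charging argument of the kind that shows \greedy\ is $\tfrac{1}{k+1}$-approximate on a $k$-extendible system, but run at the level of marginal gains \emph{at the moment each edge enters}; weakening global dominance to $\epsilon$-local dominance costs exactly the factor $\tfrac1\epsilon$ that turns $\tfrac13$ into $\tfrac{\epsilon}{2+\epsilon}$. Fix notation: let $M=\{e_1,\dots,e_\ell\}$ be the \bmatching\ produced, with edges indexed in insertion order, $M_i=\{e_1,\dots,e_i\}$, $M_0=\emptyset$, and let $O$ be an optimal \bmatching. I first note that $M$ must be \emph{maximal}: were some edge still available w.r.t.\ $M$, then the available edge with the largest marginal gain w.r.t.\ $M$ would itself be $\epsilon$-locally dominant (using $\epsilon\le 1$), so the algorithm could not have terminated. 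This is the sense in which the hypothesis is meant; for a non-maximal $\epsilon$-locally dominant set (say $M=\emptyset$ versus a single positive-weight edge) the statement is plainly false.

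Next I would construct, by induction on $i$, a chain of \bmatchings\ $O=T_0,T_1,\dots,T_\ell$ with $M_i\subseteq T_i$ and $T_i\setminus M_i\subseteq O$ throughout, together with pairwise disjoint sets $Y_i\subseteq O$ recording what leaves $O$ at step $i$. If $e_i\in T_{i-1}$, take $Y_i=\emptyset$ and $T_i=T_{i-1}$. If $e_i\notin T_{i-1}$, apply $2$-extendibility of \bmatching\ with $A=M_{i-1}\subseteq B=T_{i-1}$ (both in $\cI$) and $M_{i-1}\cup\{e_i\}=M_i\in\cI$, $e_i\notin M_{i-1}$: there is $Y_i\subseteq T_{i-1}\setminus M_{i-1}$ with $|Y_i|\le 2$ and $(T_{i-1}\setminus Y_i)\cup\{e_i\}\in\cI$; set $T_i=(T_{i-1}\setminus Y_i)\cup\{e_i\}$. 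Because this is \bmatching, $Y_i$ can moreover be taken to consist of actual conflicting edges — the only obstruction to adding $e_i=(u,v)$ to $T_{i-1}$ is saturation at $u$ or $v$, so it suffices to delete one $T_{i-1}$-edge at $u$ lying outside $M_{i-1}$ (one exists since $u$ is unsaturated in $M_{i-1}$) and likewise one at $v$. Hence each $o\in Y_i$ is \emph{adjacent} to $e_i$, and since $o\in T_{i-1}$ forces $M_{i-1}\cup\{o\}\subseteq T_{i-1}\in\cI$, each $o\in Y_i$ is \emph{available} w.r.t.\ $M_{i-1}$. A straightforward induction gives $T_i\setminus M_i\subseteq O$ and the disjointness of the $Y_i$; and since $M$ is maximal with $M\subseteq T_\ell\in\cI$, we get $T_\ell=M$, whence $\bigcup_i Y_i=O\setminus M$.

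The estimate is then immediate. For $o\in Y_i$, $\epsilon$-local dominance at the step $e_i$ entered gives $\rho_{e_i}(M_{i-1})\ge\epsilon\,\rho_o(M_{i-1})$; Proposition~\ref{prop:monotone} (with $M_{i-1}\subseteq M$) gives $\rho_o(M)\le\rho_o(M_{i-1})$; Proposition~\ref{prop:sum} gives $\sum_i\rho_{e_i}(M_{i-1})=f(M)$; and submodularity lets one split $\rho_{O\setminus M}(M)$ into single-element gains. The computation runs
\begin{align*}
f(O)&\le f(O\cup M)=f(M)+\rho_{O\setminus M}(M)\\
&\le f(M)+\sum_{o\in O\setminus M}\rho_o(M)\\
&\le f(M)+\sum_{i=1}^\ell\sum_{o\in Y_i}\rho_o(M_{i-1})\\
&\le f(M)+\tfrac1\epsilon\sum_{i=1}^\ell|Y_i|\,\rho_{e_i}(M_{i-1})\\
&\le f(M)+\tfrac2\epsilon\sum_{i=1}^\ell\rho_{e_i}(M_{i-1})\\
&=\bigl(1+\tfrac2\epsilon\bigr)f(M)=\tfrac{2+\epsilon}{\epsilon}\,f(M),
\end{align*}
so $f(M)\ge\tfrac{\epsilon}{2+\epsilon}\,f(O)$. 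Setting $\epsilon=1$ recovers the $\tfrac13$ bound for \greedy\ and \lgreedy\ listed in Table~\ref{tab:algorithms-summary}.

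The main obstacle is the exchange chain: one must verify all its properties simultaneously — $M_i\subseteq T_i$, that only $O$-edges are ever deleted, that each $Y_i$ is at once \emph{small} ($|Y_i|\le 2$, from $2$-extendibility) and \emph{local to $e_i$} (adjacent to $e_i$ and available w.r.t.\ $M_{i-1}$, which is precisely the bridge to the $\epsilon$-local-dominance hypothesis and uses \bmatching\ structure beyond abstract $2$-extendibility), and that the chain terminates at $M$ (which silently requires $M$ maximal). The only other point needing care is that, unlike the linear case, $f(O)\ne\sum_{o\in O}\rho_o(\cdot)$ in general, so one has to detour through $f(O\cup M)$ and peel off $O\setminus M$ by submodularity before the charging bound can be applied.
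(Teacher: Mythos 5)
Your proposal is correct and follows essentially the same route as the paper: an exchange chain built from the $2$-extendibility of \bmatching, charging to each $e_i$ at most two optimal edges that are adjacent to $e_i$ and available w.r.t.\ $M_{i-1}$, then applying $\epsilon$-local dominance, Proposition~\ref{prop:monotone}, and Proposition~\ref{prop:sum}. The only differences are cosmetic — you keep $M_i\subseteq T_i$ where the paper keeps $T_i$ disjoint from $M_i$, and you peel off $\rho_{O\setminus M}(M)$ directly by submodularity where the paper telescopes over the sets $M_i^*$ — and you make explicit the maximality of $M$ and the adjacency/availability of the charged edges, which the paper asserts more briefly.
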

\begin{proof}
Let $M^*$ denote an optimal matching and $M$ be a matching produced by  an $\epsilon$-locally  dominant algorithm. Denote $|M| =k$. We order the elements of $M$ such that when the edge  $e_i$ is included in  $M$, it is an $\epsilon$-locally  dominant edge. Let $M_i$ denote the locally dominant matching after adding $e_i$ to the set, where $M_0 = \emptyset$ and $M_k = M$.

Our goal is to show that for each  edge in the locally dominant algorithm, we may charge at most two distinct elements of $M^*$. At the $i$th iteration of the algorithm when we add $e_i$ to $M_{i-1}$, we will show that there exists a distinct subset $M_i^* \subset M^*$ with $|M_i^*| \leq 2$ such that $\rho_{e_i}(M_{i-1}) \geq \epsilon\rho_{e^*_j}(M_{i-1})$, for all  ${e^*_j \in M^*_i}$. We will achieve this by maintaining a new sequence of sets $\{T_j\}$, where $T_{i-1}$ is the reservoir of potential edges that  $e_i$ could be charged to. The initial set of this sequence of sets is $T_0$,  which holds the edges in the optimal matching  $M^*$. The sequence of $T$-sets shrink in every iteration 
by removing the elements charged in the previous iteration, so that 
it stores only the candidate elements that could be  charged in this and future iterations. 
 Formally,  $M^*=T_0 \supseteq T_1 \supseteq \dots \supseteq T_k = \emptyset$ such that for $1\leq i \leq k$, the  following two conditions hold.
\newline i) $M_i \cup T_i$ is also a \bmatching\, and 
\newline ii) $M_i \cap T_i = \emptyset$.  
\newline %Let a partition of the optimal matching $M^*$ be $M^*_1 \cup \dots \cup M^*_k$, where $M^*_i =T_{i-1}\setminus T_i$. Our goal here is to show that $|M^*_i| \leq 2$ and for each$e^*_l \in M^*_i$, $\epsilon\, \rho_{e^*_l}(M_{i-1}) \leq \rho_{e_i}(M_{i-1})$.
The two conditions are satisfied for $M_0$ and $T_0$ because $M_0 \cup T_0 = M^*$ and $M_0 \cap T_0 = \emptyset \cap M^* = \emptyset$.

Now we will describe the charging mechanism at each iteration. We need to construct the reservoir set $T_{i}$  from $T_{i-1}$.  Recall that $e_i$ is added at the $i$th step of the $\epsilon$-locally  dominant matching to obtain $M_i$. There are two cases to consider:
\newline i) If $e_i \in T_{i-1}$, the charging set $M^*_i = \{e_i\}$, 
$M_i = M_{i-1} \cup \{e_i\}$, and $T_i = T_{i-1} \setminus \{e_i\}$. 
\newline ii) Otherwise, let $M^*_i$ be a smallest subset of $T_{i-1}$ such that $(M_{i-1} \cup \ldots  \cup \{e_i\} \cup T_{i-1}) \setminus M^*_i$ is a \bmatching.\ Since a $b$-matching is a 2-extendible system, we know $|M^*_i| \leq 2$. 
%%AP: the following equation has an index mistake in it, I think. 
%%smf: Yes. Corrected
Then $M_i = M_{i-1} \cup \{e_i\}$; and $T_i = T_{i-1} \setminus M^*_i$.
\newline Note that the two conditions on $M_i$ and $T_i$ from the previous paragraph are satisfied after these sets are computed
from $M_{i-1}$ and $T_{i-1}$. Since  $M$ % it AP: is this right? smf: yes.
is a maximal matching, we have $T_k = \emptyset$; otherwise we could have added any of the available edges in $T_k$ to $M$. 
%We show the two cases schematically in Figures~\ref{fig:case1} and \ref{fig:case2}, respectively.

\begin{comment} 
\smf{
%smf: This may not be necessary. I am still keeping it. Please remove this if you feel its redundant.
We show an example of $M_i$, $T_i$, and $M^*_i$ in Figure~\ref{graph:lemma}. Suppose we have the square root objective function. The optimum matching are the dashed edges, i.e., $T_0 = M^* = \{\{a,b\},\{a,c\},\{b,c\},\{e,f\},\{f,g\},\{e,g\}\}$. We start with $M_0 = \emptyset$. We see that the two conditions are valid for $M_0$, and $T_0$. Initially the marginal gain of all the edges are equal. Let the locally dominant algorithm chooses $e_1 = \{c,e\}$. Now since $e_1 \notin T_0$, We have to find a smallest set whose deduction from $M_0 \cup T_0$, i.e., $M^*$ results a valid \bmatching. One such set is $M_1^* = \{\{b,c\},\{e,g\}\}$. With this, $T_1$ now becomes $\{\{a,b\},\{a,c\},\{e,f\},\{f,g\}\}$, and $M_1 = \{c,e\}$. We verify that the two conditions are valid for $T_1$ and $M_1$ too.   

}
\end{comment} 

Now when $e_i$ is added to $M_{i-1}$, all the elements of $M^*_i$ are available. This set $M^*_i$  must be the adjacent edges of $e_i$. 
Thus $\forall e^*_j \in M^*_i$, we have $\epsilon \rho_{e^*_j}(M_{i-1}) \leq \rho_{e_i}(M_{i-1})$. We can sum  the inequality for each element of $e^*_j \in M_i^*$, leading to $\sum_{j}\rho_{e^*_j}(M_{i-1}) \leq \frac{2}{\epsilon}\rho_{e_i}(M_{i-1})$.

%%AP: it would help to number each equation or inequality, and then use the number in the justification below. There are too many equations and inequalities here without any text providing justification. 
%%smf: added the line numbers
Rewriting the summation we have,
\begin{align*}
    \rho_{e_i}(M_{i-1}) \geq& \frac{\epsilon}{2}\sum_{j}\rho_{e^*_j}(M_{i-1})\\
                        \geq& \frac{\epsilon}{2} \sum_{j} \rho_{e^*_j}(M_{i-1}\cup\{e^*_1,\dots,e^*_{j-1}\}) \\
                        =& \frac{\epsilon}{2} \sum_{j} (f(M_{i-1} \cup \{e^*_1,\dots,e^*_j\})\\
                        &- f(M_{i-1} \cup \{e^*_1,\dots,e^*_{j-1}\}))\\
                        =& \frac{\epsilon}{2} (f(M_{i-1} \cup \{e^*_1,\dots,e^*_{|M^*_i|}\}) - f(M_{i-1})) \\ 
                        =& \frac{\epsilon}{2} (f(M_{i-1} \cup M^*_i) - f(M_{i-1})) \\
                         \geq& \frac{\epsilon}{2} (f(M \cup M^*_i) - f(M)).
 \end{align*}                        
 \smf{ In line 2, each of the summands is a superset of $M_{i-1}$, and the inequality follows from submodularity of $f$ (Proposition~\ref{prop:monotone}).}  %(Proposition~\ref{prop:monotone}). 
 Line 3 expresses  the marginal gains in terms of the function $f$. The fourth equality is due to telescoping of the sums,  the fifth equality replaces the set $M^*_i$ for its elements,  and the  last inequality  follows by monotonicity of $f$ (from Proposition~\ref{prop:monotone}).
 
 We now sum over all the elements in $M$ as follows.
\begin{align*}
     \sum_{i} \rho_{e_i}(M_{i-1}) \geq & \frac{\epsilon}{2} \sum_{i} (f(M \cup M^*_i) - f(M)),\\
     f(M) \geq& \frac{\epsilon}{2} \sum_{i} (f(M \cup \{M^*_1 \cup \dots M^*_i\}) \\&- f(M \cup \{M^*_1, \dots, M^*_{i-1}\}))\\
       =& \frac{\epsilon}{2} (f(M \cup M^*) - f(M))\\
          \geq& \frac{\epsilon}{2} (f(M^*) - f(M)). \\
    f(M)       \geq& \frac{\epsilon}{2+\epsilon} f(M^*). 
\end{align*}

 The left side of the second line of the above equations is due to Proposition~\ref{prop:sum}, while the right side comes from Proposition~\ref{prop:monotone}. The next equality telescopes the sum,
 and the fourth inequality is due to monotonicity of $f$. Finally the last line is a restatement of the inequality above it. 
%Note that for $\epsilon = 1$, the approximate ratio is $1/3$.
\end{proof}

% AP: do we want to say a semi-b-matching? 
\begin{corollary}
\label{cor:semi-match}
Any algorithm that produces an $\epsilon$-locally dominant semi-matching  is $\frac{\epsilon}{1+\epsilon}$-approximate for a submodular objective function.
\end{corollary}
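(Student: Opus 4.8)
The plan is to mimic the proof of Theorem~\ref{theorem:one-third}, but replace the ``$2$-extendible'' bound $|M_i^*| \le 2$ with a bound $|M_i^*| \le 1$ that holds for semi-matchings. Recall that a semi-matching places a degree constraint on only one side of a bipartite graph (equivalently, it is the intersection of a single partition matroid with the full subset system, so it is a matroid and hence $1$-extendible). The key structural fact I would establish first: when we add an $\epsilon$-locally dominant edge $e_i$ to $M_{i-1}$, the smallest subset $M_i^*$ of $T_{i-1}$ whose removal keeps $(M_{i-1} \cup \{e_i\} \cup T_{i-1}) \setminus M_i^*$ a semi-matching has size at most $1$, because only the constrained endpoint of $e_i$ can have its degree bound violated, and at most one edge of $T_{i-1}$ incident on that endpoint need be evicted.

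With that fact in hand, I would run the identical charging argument: maintain the reservoir sets $M^* = T_0 \supseteq T_1 \supseteq \dots \supseteq T_k = \emptyset$ with the two invariants (i) $M_i \cup T_i$ is a semi-matching and (ii) $M_i \cap T_i = \emptyset$, and in case~(ii) set $M_i^* \subseteq T_{i-1}$ to be that smallest eviction set, now of size $\le 1$. Local dominance gives $\epsilon\,\rho_{e_j^*}(M_{i-1}) \le \rho_{e_i}(M_{i-1})$ for the (at most one) $e_j^* \in M_i^*$, so summing over $M_i^*$ yields $\sum_j \rho_{e_j^*}(M_{i-1}) \le \frac{1}{\epsilon}\rho_{e_i}(M_{i-1})$ — the only change from Theorem~\ref{theorem:one-third} is that the factor $2$ becomes $1$. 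Propagating this through the telescoping/submodularity chain gives $\rho_{e_i}(M_{i-1}) \ge \epsilon\bigl(f(M \cup M_i^*) - f(M)\bigr)$, and summing over all $i$ together with Propositions~\ref{prop:sum} and~\ref{prop:monotone} gives $f(M) \ge \epsilon\bigl(f(M^*) - f(M)\bigr)$, i.e. $f(M) \ge \frac{\epsilon}{1+\epsilon} f(M^*)$.

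The main obstacle — really the only nontrivial point — is justifying $|M_i^*| \le 1$ cleanly, i.e., arguing that a semi-matching forms a $1$-extendible system (it is a matroid, so this is standard, but it should be stated) and that the reservoir update preserves invariant~(i) with the small eviction set. Everything downstream is a line-by-line transcription of the Theorem~\ref{theorem:one-third} proof with $2 \mapsto 1$, so I would not reproduce the algebra in full; instead I would write: ``The proof is identical to that of Theorem~\ref{theorem:one-third}, except that a semi-matching is a $1$-extendible system (indeed a matroid), so $|M_i^*| \le 1$ in case~(ii); carrying the constant $1$ in place of $2$ through the derivation yields $f(M) \ge \frac{\epsilon}{1+\epsilon} f(M^*)$.'' If a referee wanted the full chain, I would include the two displayed \texttt{align*} blocks with $\frac{\epsilon}{2}$ replaced by $\epsilon$ throughout and the final line reading $f(M) \ge \frac{\epsilon}{1+\epsilon} f(M^*)$.
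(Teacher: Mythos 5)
Your proposal is correct and is essentially identical to the paper's own proof: both observe that a semi-matching is a matroid and hence a $1$-extendible system, so $|M_i^*|\le 1$, and then substitute $1$ for $2$ throughout the charging argument of Theorem~\ref{theorem:one-third} to obtain $f(M)\ge \frac{\epsilon}{1+\epsilon}f(M^*)$. Your added justification that only the constrained endpoint of $e_i$ can force an eviction is a nice concrete explanation of the $1$-extendibility that the paper simply cites.
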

\begin{proof}
A semi-matching (there are matching constraints on only one vertex part in a bipartite graph) forms a matroid, which is a 1-extendible system \cite{mestre2006greedy}. So by definition of 1-extendible system, $|M^*_i| \leq 1$. We can substitute this value in appropriate places in the proof of Lemma~\ref{theorem:one-third} and get the desired ratio.
\end{proof}

\subsection{Local Lazy Greedy Algorithm} 
%We have just shown that for submodular \bmatching,  any algorithm that chooses  locally dominant edges produces a matching that is as good as the \greedy\ w.r.t its approximation ratio.  
Now we  design a locally dominant edge algorithm to compute a \bmatching,  outlining our approach in Algorithm~\ref{alg:llg}. 
We say that a vertex $v$ is \emph{available} if there is an available edge incident on it, i.e., adding the edge to the matching  does not violate the $b(v)$ constraint. 

For each vertex $v \in V$, we  maintain a priority queue that stores the edges incident on  $v$. The key value of the queue is the marginal gain of the adjacent edges. 
At each iteration of the algorithm we alternate between two operations:  \textit{update} and  \textit{matching}. In the \textit{update} step, we update a  best incident edge of an unmatched vertex $v$. 
Similar to \lgreedy, we can make use of the monotonicity of the marginal gains, and the lazy evaluation process is shown in Algorithm~\ref{alg:lz-eval}. After this step, we can consider a best incident edge for each vertex  as a candidate to be matched. We also maintain an array (say \textit{pointer}) of size $|V|$ that holds the best vertex found in the \textit{update} step. The next step is the actual \textit{matching}. We scan over all the available vertices $v \in V$ and check whether $pointer(v)$ also points to $v$ (i.e., $pointer(pointer(v)) = v$). If this condition is true, we have identified a locally dominant edge, and we add it to the matching. We continue the two steps until no available edge remains. 

\begin{algorithm}
\caption{Lazy Evaluation (Max Heap pq)}
\label{alg:lz-eval}
\begin{algorithmic}[1]
\WHILE{$pq$ is not empty}
    \STATE Edge $e$ = pq.pop()
    \STATE Update  marginal gain of $e$
    \IF{ $e$ is available}
        \IF{ marg\_gain of $e$ $ \geq $ marg\_gain of pq.top()}
            \STATE \textbf{break}
        \ELSE
            \STATE push $e$ and its updated gain into $pq$
        \ENDIF
    \ENDIF
\ENDWHILE
\end{algorithmic}
\end{algorithm}

\begin{algorithm}
\caption{Local Lazy Greedy Algorithm}
\label{alg:llg}
\begin{algorithmic}[1]
\LineComment{Initialization} 
\FOR{$v \in V$}
    \STATE pq(v) $\coloneqq$ max-heap of the incident edges keyed by marginal gain
    \STATE pointer(v) = pq(v).top
\ENDFOR
\STATEx 
\LineComment{Main Loop}
\WHILE{$\exists$ an edge  with  both its endpoints available}
\LineComment{Updating}
\FOR{$v \in V$ such that  $u$ is available}
    \STATE Update pq(v) using  Lazy Evaluation (pq(v))
    \STATE pointer(v) = pq(v).top
\ENDFOR
\STATEx 
\LineComment{Matching }
\FOR{ $u \in V$ such that  $u$ is available}
    \STATE v = pointer(u)
    
    \IF{$v$ is available and pointer(v) == u}
        \STATE M = M $\cup$ $\{u,v\}$
    \ENDIF
\ENDFOR
\ENDWHILE

\end{algorithmic}
\end{algorithm}

\Alex{We omit the short proofs of the following two results.}
\begin{Lemma}
The \llgreedy\  %Local Lazy Greedy 
algorithm is locally dominant.
\end{Lemma}
%\begin{proof}
%This is immediate from the algorithm.
%\end{proof}
\begin{corollary}
%The \llgreedy \  algorithm is $1/3$-approximate 
\Alex{For the \bmatching \ problem with submodular objective,  the \llgreedy \  algorithm is $1/3$-approximate. }
\end{corollary}

\begin{Lemma}
 Under Assumption~\ref{assump:local}, the time complexity of Algorithm~\ref{alg:llg} is $O(\beta\, m \log \Delta)$.
\end{Lemma}
\begin{proof}

As for the  Lazy Greedy algorithm, the number of total push operations is $O(m \beta \,\log{\Delta})$ (the argument of the logarithm  is $\Delta$ instead of $m$ because the maximum size of a priority queue is $\Delta$).    We  maintain two \smf{arrays}, say \textit{PotentialU} and \textit{PotentialM}, of vertices that hold the candidate vertices for iteration in the \emph{update} and \emph{matching} step, respectively. Initially all the vertices are in \emph{PotentialU} and \emph{PotentialM} is empty.  The two arays are set to empty after their corresponding step. In the \emph{update} phase, we insert the vertices for which the marginal gain changed into \emph{PotentialM}. %After this step, we empty the \textit{PotentialU} set.
In the \emph{matching} step, we iterate only over the vertices in \emph{PotentialM} array. When an edge $(u,v)$ is matched in the \emph{matching} step, we insert $u$, $v$ if they are unsaturated and all their available neighboring vertices into the \emph{PotentialU}. This is the \smf{array} on which in the next iteration, \emph{update} would iterate. Since a vertex $u$ can be inserted at most $b(u) + \sum_{v \in N(u)}b(v)$ times into the array, the overall size of \emph{PotentialU} array during the execution of the algorithm  is $O(m \beta)$. The \emph{PotentialM} is always a subset of \emph{PotentialU}. So it is also bounded by $O(m \beta)$.% Next we  will show that maintaining such sets is not expensive. Since a vertex can be chosen  to insert into \emph{PotentialU} at most $\beta$ times and each time it may have to scan over its neighbors in $O({\rm degree}(v))$ time. 
%So the total number of insertions  is $O(\beta\, m)$. %Note that the average time complexity of insertion in an unordered set is constant. 
Combining all these we get, an $O( \beta \, m \log \Delta)$ %average  
time complexity.
\end{proof}

\subsection{Parallel Implementaion of Local Lazy Greedy}

%Both the standard \greedy\ and \lgreedy\ algorithm offer little to no concurrency. The \greedy\ algorithm requires global ordering of the gains after each iteration, and the \lgreedy\ has to maintain a global priority queue. On the other hand, the  \llgreedy\ algorithm is concurrent. Here local dominance is sufficient to maintain the desired approximation ratio. We present a shared memory parallel algorithm based on the serial \llgreedy\ in Algorithm~\ref{alg:pllg}. 

\begin{algorithm}[t]
\caption{Parallel Local Lazy Greedy}
\label{alg:pllg}
\begin{algorithmic}[1]
\LineComment{Initialization} 
\FOR{$v \in V$  \textbf{in parallel}}
    \STATE pq(v) $\coloneqq$ max-heap of the incident edges keyed by marginal gain
    \STATE pointer(v) = pq(v).top
\ENDFOR
\STATEx 
\LineComment{Main Loop}
\WHILE{$\exists$ an edge %$\{u,v\}$ 
where both endpoints 
%$u$ and $v$ 
are available}
\LineComment{Updating}
\FOR{$v \in V$ such that  $v$ is available \textbf{in parallel} }
    \STATE Update pq(v) according to Lazy Evaluation (pq(v))
    \STATE pointer(v) = pq(v).top
\ENDFOR
\STATEx 
\LineComment{Matching }
\FOR{ $u \in V$ such that $u$ is available \textbf{in parallel} }
    \STATE v = pointer(u)
    
    \IF{$v$ is available and  $u < v$ and pointer(v) == u }
            \STATE Mark $(u,v)$ as a matching edge 
    \ENDIF
\ENDFOR
\ENDWHILE

\end{algorithmic}
\end{algorithm}

Both the standard \greedy\ and \lgreedy\ algorithm offer little to no concurrency. The \greedy\ algorithm requires global ordering of the gains after each iteration, and the \lgreedy\ has to maintain a global priority queue. On the other hand, the  \llgreedy\ algorithm is concurrent. Here local dominance is sufficient to maintain the desired approximation ratio. We present a shared memory parallel algorithm based on the serial \llgreedy\ in Algorithm~\ref{alg:pllg}. 

%\Alex{Say what we can about the complexity of the parallel algorithm.}
\smf{One key difference between the parallel and the serial algorithms is on maintaining the \emph{potentialU} and \emph{potentialM} arrays. One option is for each of the processors to maintain individual \emph{potentialU} and \emph{potentialM} arrays and concatenate them after the corresponding steps. These arrays may contain duplicate vertices, but they can be handled as follows. We maintain a bit array of size of $n$ initialized to $0$ in each position. This bit array would be reset to $0$ at every iteration. We only process vertices that  have $0$ in its corresponding position in the array. To make sure that only one processor is working on the vertex, we use an atomic {\tt test-and-set} instruction to set the corresponding bit of the array. Thus the total work in the parallel algorithm is the same as of that the serial one i.e., $O( \beta \, m \log \Delta)$. Since the fragment inside the while loop is embarrassingly parallel, the parallel runtime depends on the number of iterations. This number depends on the weights and the edges in the graph, but in the worst case,  could be $O(\beta n)$. We leave it for future work to bound  the number of iterations under different weight distributions (say random) and different graph structures.  }
\section{Experimental Results}

%\smf{Should we write something describing the machines? 
The experiments on the serial algorithm  were run on an Intel Haswell
CPUs  with 2.60 GHz clock speed and 512 GB memory.
%Purdue's  Snyder cluster, and 
The parallel algorithm  was executed on an Intel Knights Landing node %on  Cori at NERSC, Berkeley.  A Cori node has the Intel 
with a Xeon Phi processor (68 physical cores per node) with 1.4 GHz clock speed and 96 GB DDR4 memory. 
%A Snyder node has two haswell cpus with 2.60 GHz clock speed and 512 GB memory. 
%Will this violate the double blind policy?}

\subsection{Dataset}
We tested our algorithm on both real-world and synthetic graphs shown in Table~\ref{tab:Problems}. (All Tables and Figures from this section are at the end of the paper.)  We generated two classes of RMAT graphs: (a) G500, representing graphs with skewed degree distributions from the  Graph 500 benchmark \cite{graph500} and (b)  SSCA, from the HPCS Scalable Synthetic Compact Applications graph analysis (SSCA\#2) benchmark using the following parameter settings: (a)  $a=0.57$, $b=c= 0.19$, and $d=0.05$ for G500,  and (b) $a=0.6$, and $b=c=d=0.4/3$ for SSCA. 
Moreover, we considered eight  problems taken from the  SuiteSparse Matrix Collection   \cite{FMC11} covering application areas such as medical science, structural engineering, and sensor data. We also included a large web-crawl graph(\textit{eu-2015})~\cite{BMSB} and a movie-interaction network(\textit{hollywood-2011})~\cite{BoVWFI}.

\subsection{Serial Performance}

%The \llgreedy\ algorithm is compared with the \lgreedy\ algorithm  
In Table~\ref{tab:ser} 
we compare the \llgreedy\ algorithm with the \lgreedy\ algorithm.   
Each edge weight is chosen uniformly at  random from the set $[1,5]$. The submodular function employed here is the concave polynomial with $\alpha = 0.5$, and $b=5$ for each vertex. Since both \lgreedy\ and \llgreedy\  algorithms  have equal approximation ratios, the objective function values  computed by them are equal, but  the \llgreedy\  algorithm is faster.  For the largest problem in the dataset, the  \llgreedy\ algorithm is about five times faster than the \lgreedy, \Alex{and it is about three times faster in geometric mean.} 

\subsection{Parallel Performance}
%AP: do not include Fig 2 ? 
Performance of the parallel implementations of the  \llgreedy\ algorithm is shown  
% on two different machines, 
by a scalibility plot in  %Figure~\ref{fig:shm1} and 
Figure~\ref{fig:shm2}.  
%In the first machine, there are two sockets each having ten cores. We use the maximum eighteen cores available for the experiment. The edge weight and $b$ values are set as in  the serial experiment. The algorithm scales almost perfectly up to ten  threads but there is a change in the slope  after ten threads. Nevertheless, for most  graphs the parallel implementation scales nicely with speedup values greater than twelve. 
Figure~\ref{fig:shm2} reports results from  a machine with $68$ threads, with all the cores  on a single socket. We see that all problems show good speedups, and   all but three problems %do 
show good scaling with high numbers of threads. 

\section{Load Balancing in Quantum Chemistry}
\label{sec:lbr}

%\subsection{Load Balancing Fock Matrix Computation}
\label{sec:Fock} 
We show an application of submodular \bmatching\ in   Self-Consistent  Field  (SCF) computations  in computational chemistry~\cite{scf}. 
%\smf{ We will model the aassignment of blocks into multiprocessors to a bipartite graph and show that our submodular \bmatching\ achieves balanced partition.}

\subsection{Background}
The SCF calculation is iterative, and we focus on the computationally dominant kernel that is executed in every SCF iteration: the two-electron contribution to the Fock matrix build. The algorithm 
executes forty to fifty iterations to converge to a  predefined tolerance. 

The two-electron contribution involves a $\Theta(n^4)$ calculation over  $\Theta(n^2)$ data elements,  where $n$ is the number of {basis functions}.  The computation is organized as a  set of $n^4$ tasks, where only a small percentage ($< 1\%$) of tasks contribute to the Fock matrix build. Before starting the main SCF iterative loop,  the work required for the Fock matrix  build in each iteration is computed  from the number of nonzeros in the matrix, which is proportional to the work across all SCF iterations. 
This step is inexpensive since it only captures the execution pattern of the Fock matrix build algorithm without performing other computations. The task assignment is recorded prior to the first iteration and then reused across all SCF iterations.

The Fock matrix build itself is also iterative (written as a $\Theta(n^4)$ loop), where each iteration represents a task that computes some elements of the Fock matrix.  For a given iteration, a task is only executed upon satisfying some domain constraints based on the values in two other pre-computed matrices, the  Schwarz and density matrices.

%The value of the elements in the Schwarz matrix determines whether the task would contribute to the Fock matrix or not and hence, the load imbalance. 
%The information captured prior to the start of the SCF loop is used to find an assignment of  tasks to  MPI processor ranks in order to load balance the  Fock matrix build computation. 
The default load balancing used in NWChemEx~\cite{Apre+:NWChemEx} is to assign iteration indices of the outermost two loops in the Fock matrix build across MPI ranks using an atomic counter based work sharing approach. All MPI ranks atomically increment a global shared counter to identify the loop iterations to execute. This approach limits scalability of the Fock build since the work and number of tasks across MPI ranks are not guaranteed to be balanced.
%The default approach used in NWChem involves  term rewriting, decoupling the load balancing step from the work to be performed, localized task assignment, replication of data to reduce communication, and dynamic load balancing, and this is  discussed in ~\cite{panyala2012use}.

\begin{comment} 
\begin{figure}
    \centering
    \includegraphics[scale=0.4]{}
    \caption{ Assigning tasks to processors to balance the computational work using a submodular $b$-matching.}
    \label{fig:lb}
\end{figure}
\end{comment} 

The task assignment problem here naturally corresponds to a \bmatching\  problem. Let $G(U,V,W)$ be a complete bipartite graph, where $U,V,W$ represent the sets of blocks of the Fock matrix, the set of machines, and the load of the (block,machine) pairs, respectively. The $b$ value for each vertex in $U$ is set to $1$;  \Alex{for each vertex in $V$, it is set to $\lceil|U|/|V|\rceil$ in order to balance the number of MPI messages that each processor needs to send.}  We will show that a submodular objective with these \bmatching\ constraints implicitly encodes the desired load balance. To motivate this, we use the square root function ($\alpha = 0.5$) as our objective function in Eqn.~(\ref{sub_b-match_bip}). 

We consider  the execution of the Greedy algorithm for Submodular \bmatching\ \Alex{on a small example}  % in Figure~\ref{fig:lb}. 
consisting of four tasks with work loads of $300$, $200$, $100$ and $50$
on two machines $M1$ and $M2$. The \bmatching\ constraint requires each processor to be assigned two tasks. 
%The top-left graph represents the complete bipartite graph of tasks and processors as the two vertex parts.  The numeral on a  vertex on the left side  represents the  computational  work associated with the block. 
%of all the edges incident on it. 
At the first iteration, we assign the first block (load $300$) to machine $M1$. Note that assigning the second block to machine $M1$ would have the same marginal gain as assigning it  to $M2$ if the objective function were linear. But since  the square root objective function is submodular, the marginal gain of assigning the second block to the second machine is higher than assigning it to the first machine. So we will assign the second block  (load $200$) to machine $M2$. 
Then the third block of work $100$ would be assigned to $M2$ rather
than $M1$, due to the higher marginal gain, and finally the last block with load $50$ would be assigned to $M1$ due to the \bmatching\ constraint. 
%The four subfigures show how the first three tasks are assigned to the two machines. 
We see that modeling the objective by a submodular function implicitly provides the desired load balance, and  the experimental results %in Section~\ref{sec:lbr} 
will confirm this.

%and there was nothing specifically done for this chemistry application.

\subsection{Performance Results}
As a representative bio-molecular system we chose the Ubiquitin protein to test performance,  varying the basis functions used in the computation to represent molecular orbitals, and to demonstrate the capability of our implementation to handle large problem sizes.
%A basis set is a set of basis functions %(also called basis functions) and is 
%used to describe the wave function of a molecular system. The problem size or complexity of a calculation is proportional to the basis set size $N$. The size of  basis sets may %vary depending on the desired accuracy of computation and the 
%functions used for a molecular system. 
%, which are mathematical representations used to describe the properties of a molecular system, can vary in size N, even for a single molecular system. 
The assignment algorithm is general enough to be applied to any scenario where such computational patterns exist, and does not depend on the molecule or the basis functions used. 

\begin{comment} 
Current load balancing for Self-Consistent-Field (SCF) computations in NWChemEx%~\cite{Apre+:NWChemEx} is described in Sec.~\ref{sec:Fock}, and 
 is based on 
each processor dynamically choosing iteration indices to execute using an atomic counter; this causes load imbalance and limits the scalability of the Fock matrix build. 
%follows the strategies outlined in \cite{panyala2012use}. 
%which uses the Task  Scheduling  Library  (TASCEL) %\cite{dinan2009scalable}. T
%The  library  employs  random  work stealing   coupled   with   distributed   termination   detection   to dynamically  balance  load.
%Although it is possible to employ dynamic work stealing strategy in each iteration, the overhead for load balancing  reduces the overall performance gain. 
Hence the task assignment is recorded prior to the first iteration and then reused across all SCF iterations.
\end{comment} 

We visualize the load  on the processors  in Fig.~\ref{fig:load-bal}. 
%We have used the square root function as the objective. 
The standard deviation for the current assignment is $10^5$, and the coefficient of variation (Std./Avg.) is $7.5 \times 10^{-2}$; 
while these quantities for the submodular assignment are  $436$ and $3
\times 10^{-4}$,  respectively. It is clear that the latter assignment achieves much better load balance than the former. 
The run time is  plotted against the number of processors  in Figure~ 
%\ref{fig:scale_nwchem_sto3g} and 
\ref{fig:scale_nwchem_6-31g}. It can be seen that the current assignment does not scale beyond $3000$ processors, where as the submodular assignment scales to $8000$ processors of Summit. The better load balance also leads to a four-fold speedup over the default assignment. 
%\smf{ we need to explain the  result and how it is significant?}
Since the Fock matrix computation takes about fifty iterations, we reduce 
the total run time from $30$ minutes to $8$ minutes on Summit. 

%%%%%%%%

\begin{comment} 
\begin{figure}[p]
    \centering
    \includegraphics[width=1.1\columnwidth]{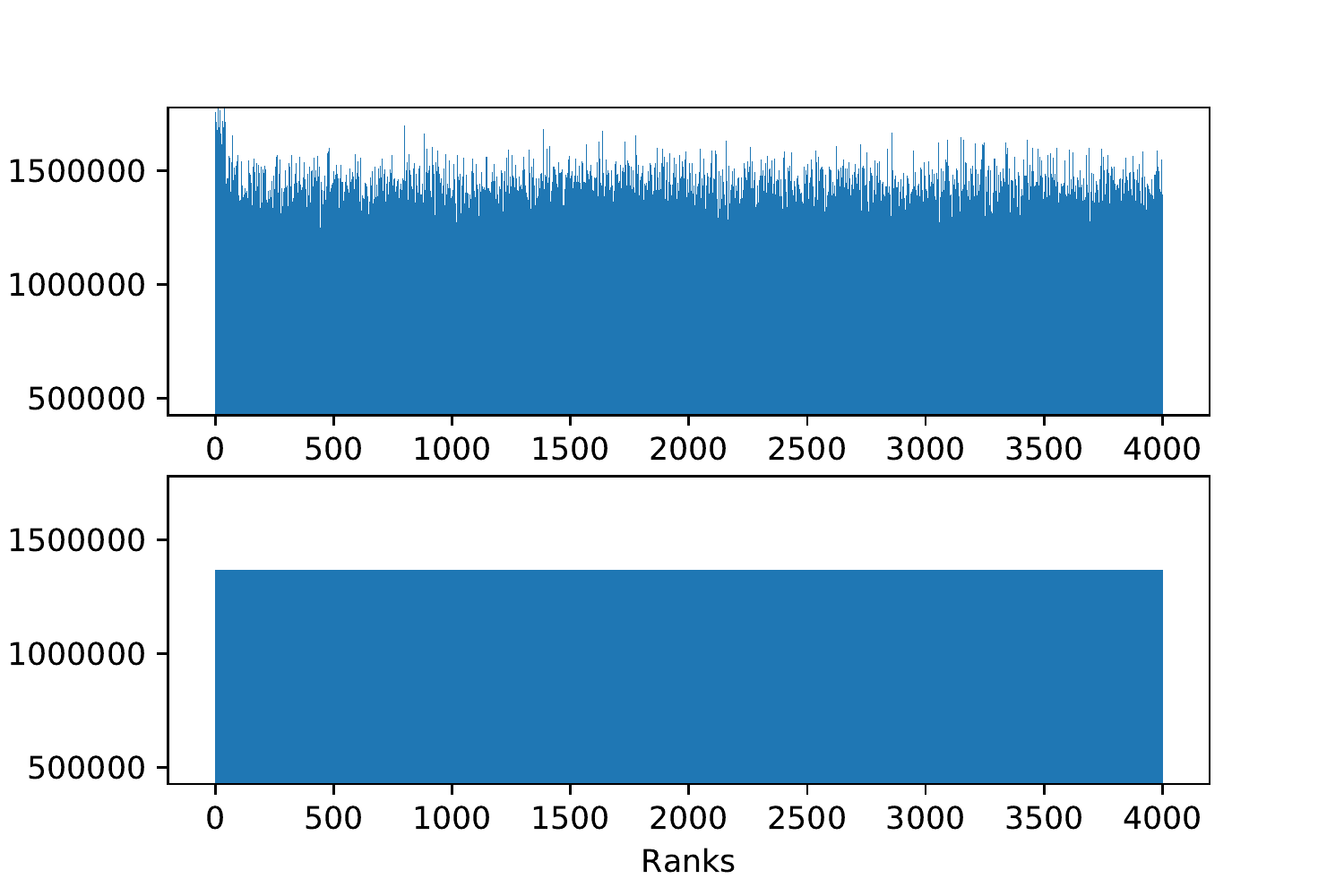}
    \caption{Load distribution for  4000 compute nodes for the ubiquitin Molecule. Top: Results from current scheduling, Bottom: submodular scheduling.}
    \label{fig:load-bal}
\end{figure}

%\begin{comment} 
\begin{figure}
    \centering
    \includegraphics[width=1.1\columnwidth]{}
    \caption{Runtime comparison per iteration for the default and proposed scheduling  with the sto3g basis functions.}
    \label{fig:scale_nwchem_sto3g}
\end{figure}
%\end{comment} 

\begin{figure}[p]
    \centering
    \includegraphics[width=1.1\columnwidth]{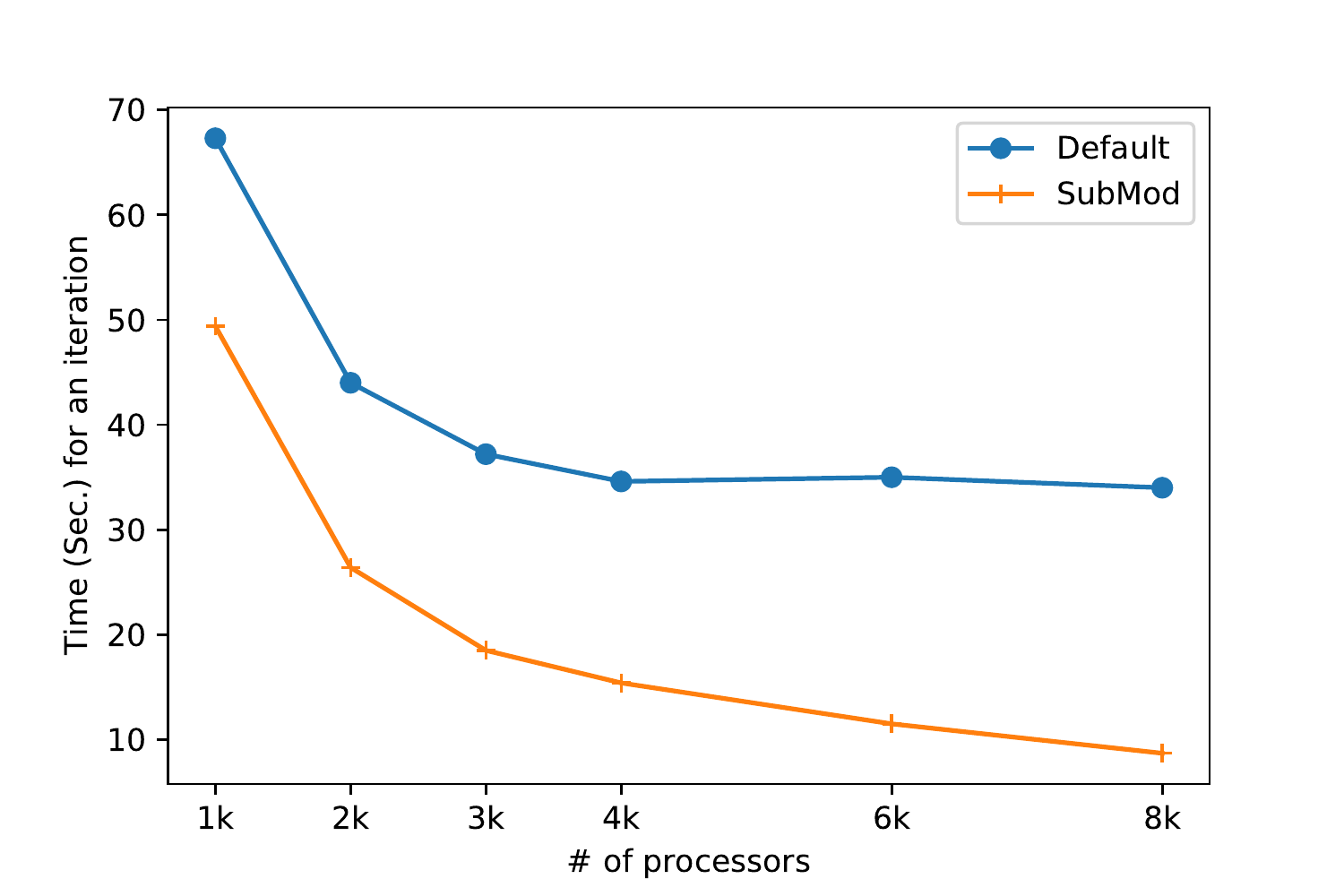}
    \caption{Runtime comparison per iteration for the default and proposed scheduling with the  6-31g basis functions.}
    \label{fig:scale_nwchem_6-31g}
\end{figure}
\end{comment} 
%\input{conclusion}

\begin{figure}[h]
    \centering
    \includegraphics[width=\columnwidth]{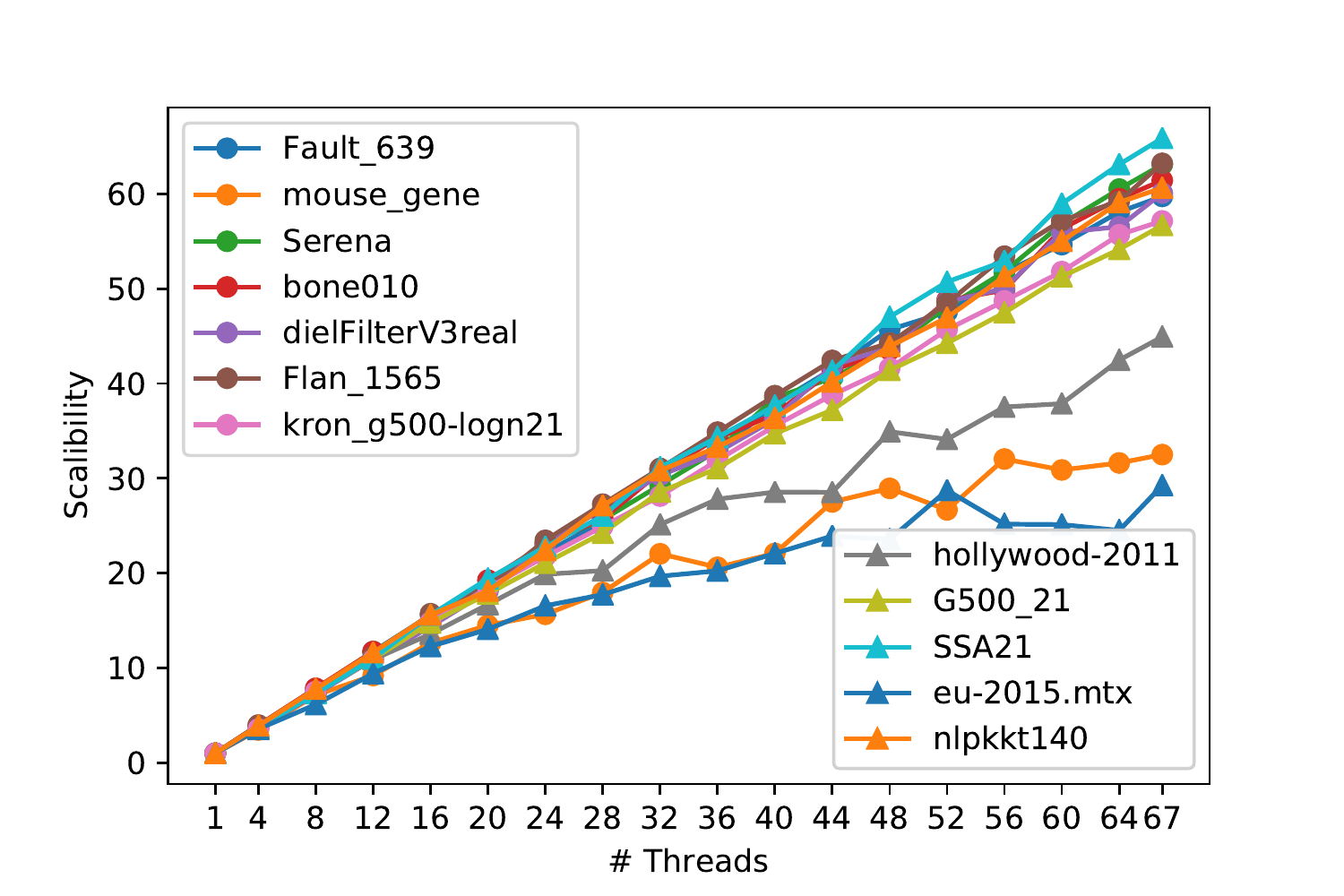}
    \caption{Scalability of the \llgreedy\ algorithm for submodular $b$-matching with  67 threads.}
    \label{fig:shm2}
\end{figure}

\begin{figure}[h]
    \centering
    \includegraphics[width=\columnwidth]{}
    \caption{Visualizing the load distribution for the Fock matrix computation for the Ubiquitin protein. Results from:  Top, current assignment  on NWChemEx. Bottom,  submodular assignment.}
    \label{fig:load-bal}
\end{figure}

\begin{figure}[h]
    \centering
    \includegraphics[scale=0.4]{}
    \caption{Runtime  per iteration for the current (default) and submodular assignments with the  6-31g basis functions \Alex{for the Ubiquitin protein}  in NWChemEx on
    Summit.}
    \label{fig:scale_nwchem_6-31g}
\end{figure}

\begin{table}[h]
\begin{tabular}{lrrr}
\hline
\textrm{Problems}          & \textrm{Vertices} & \textrm{Edges} & \textrm{Mean} \\
                &         &  & \textrm{Degree} \\
\hline
Fault\_639        & 638,802                      & 13,987,881                & 44                              \\
mouse\_gene       & 45,101                       & 14,461,095                & 641                             \\
Serena            & 1,391,349                    & 31,570,176                & 45                              \\
bone010           & 986,703                      & 35,339,811                & 72                              \\
dielFilterV3real  & 1,102,824                    & 44,101,598                & 80                              \\
Flan\_1565        & 1,564,794                    & 57,920,625                & 74                              \\
kron\_g500-logn21 & 2,097,152                    & 91,040,932                & 87                              \\
hollywood-2011    & 2,180,759                    & 114,492,816               & 105                             \\
G500\_21          & 2,097,150                    & 118,594,475               & 113                             \\
SSA21             & 2,097,152                    & 123,097,397               & 117                             \\
eu-2015           & 11,264,052                   & 257,659,403               & 46                              \\
nlpkkt240         & 27,993,600                   & 373,239,376               & 27                             \\
\hline
\end{tabular}
\caption{The properties of the test graphs listed by increasing number of edges.}
\label{tab:Problems}
\end{table}

\begin{table}[h]
\centering
%\resizebox{1.2\columnwidth}{!}{
\begin{tabular}{lrrrrr}
\hline
Problems          & \multicolumn{1}{l}{Weight}                       & \multicolumn{2}{l}{Time (sec.)}                  & \multicolumn{1}{l}{Rel. Perf} \\
\hline
                   & \multicolumn{1}{l}{} & \multicolumn{1}{l}{LG} & \multicolumn{1}{l}{LLG} & \multicolumn{1}{l}{LG/LLG}    \\ \hline
Fault\_639                      & 3.07E+06                & 61.05                  & 16.83                   & 3.63                          \\
mouse\_gene                      & 1.90E+05                & 50.68                  & 22.41                   & 2.26                          \\
Serena                          & 6.69E+06                & 155.81                 & 40.27                   & 3.87                          \\
bone010                          & 4.80E+06                & 177.37                 & 44.15                   & 4.02                          \\
dielFilterV3real                 & 5.35E+06                & 221.92                 & 62.22                   & 3.57                          \\
Flan\_1565                       & 7.63E+06                & 310.31                 & 72.00                   & 4.31                          \\
kron\_g500-logn21                & 3.69E+06                & 304.85                 & 105.58                  & 2.89                          \\
hollywood-2011                   & 8.59E+06                & 622.73                 & 163.26                  & 3.81                          \\
G500\_21                        & 3.93E+06                & 344.13                 & 137.06                  & 2.51                          \\
SSA21                            & 9.46E+06                & 588.16                 & 285.79                  & 2.06                          \\
eu-2015                         & 2.40E+07                & 1098.40                & 396.16                  & 2.77                          \\
nlpkkt240                        & 1.31E+08                & 2456.34                & 465.30                  & 5.28                          \\
\hline
Geo. Mean            &     &    &     & \textbf{3.29}  \\
\hline
\end{tabular}
%}
\caption{The objective function values and comparison of the serial run times  for the  \lgreedy\ and \llgreedy\ algorithms.}
\label{tab:ser}
\end{table}

\clearpage 

\bibliographystyle{plain}
\bibliography{ref.bib}
%\newpage

%\input{figures}

\end{document}